\documentclass[12pt]{amsart}
\usepackage{amssymb,amsmath}
\usepackage[dvips]{graphicx}
\usepackage{pslatex}
\usepackage{amsmath,amscd}
\usepackage{amsthm}
\usepackage{overpic}
\usepackage{color}

\usepackage{tikz}
\usetikzlibrary{shapes.multipart, arrows.meta, decorations.pathreplacing}
\usepackage{tabularx}
\usepackage{float}

\usepackage[hidelinks]{hyperref}

 \setlength{\textwidth}{6in}
 \setlength{\textheight}{8.5in}
 \setlength{\topmargin}{0in}
 \setlength{\oddsidemargin}{.25in}
 \setlength{\evensidemargin}{.25in}

\theoremstyle{plain} \newtheorem{thm}{Theorem}[section]
 
\newtheorem{lemma}[thm]{Lemma}

\newtheorem*{namedtheorem}{\theoremname}
\newcommand{\theoremname}{testing}

\theoremstyle{definition} \newtheorem{defn}[thm]{Definition}

\newtheorem{ex}[thm]{Example} 
\theoremstyle{remark} \newtheorem{remark}[thm]{Remark}

\author{Christian Millichap}
\address{Department of Mathematics\\ 
Furman University\\ 
Greenville, SC 29613}
\email{christian.millichap@furman.edu}

\author{Yeeka Yau}
\address{Department of Mathematics \& Statistics\\ 
UNC Asheville\\ 
Asheville, NC 28804}
\email{yyau@unca.edu}

\author{Alyssa Pate}
\address{
Furman University\\ 
Greenville, SC 29613}
\email{pateal0@furman.edu}

\author{Morgan Carns}
\address{
Furman University\\ 
Greenville, SC 29613}
\email{carnmo5@furman.edu}

\usepackage{fancyhdr}

\pagestyle{fancy} \fancyhf{} \fancyhead[LE,RO]{\bfseries \thepage}
\fancyhead[CO]{\bfseries Modifying twist algorithms}
\fancyhead[CE]{\bfseries C. Millichap, Y. Yau, A. Pate, and M. Carns}
\fancypagestyle{plain}

%%%%%%%%%%%%%%%%%%%%%%%%%%%%%%%%%%%%%%%%%%%%
 
\begin{document}

\title{Modifying twist algorithms for determining the key length of a Vigen\`{e}re cipher}

\begin{abstract}
In this article, we analyze and improve upon the twist-based algorithms introduced by Barr--Simoson and Park--Kim--Cho--Yum for determining the key length of a Vigen\`{e}re cipher. We provide an in-depth discussion on how the domain of the twist index affects the accuracy of these algorithms along with supporting experimental evidence. We also introduce a new twist-based algorithm, the twist$^{++}$ algorithm, and show this algorithm is more accurate than the twist$^{+}$ algorithm for a wide range of key lengths and text lengths.
\\\\
\textbf{Keywords:} Vigen\`{e}re cipher, key length, index of coincidence, twist algorithm
\end{abstract}

\maketitle 
\let\thefootnote\relax\footnote{Address correspondence to Yeeka Yau \hyperlink{yyau@unca.edu}{yyau@unca.edu}, Department of Mathematics and Statistics, University of North Carolina Asheville, 1 University Heights, Asheville, NC 28804}

%%%%%%%%%%%%%%%%%%%%%%%%%%%%%%%%%%%%%%%%%%%

\section{Introduction}
\label{sec:intro}
The Vigen\`{e}re cipher is a well-studied topic in classical cryptology and frequently serves as the introductory example to polyalphabetic substitution ciphers in the literature. For this cipher, one fixes a keyword that determines a sequence of shifts to be used for encrypting individual letters, where the keyword repeats as needed to cover the length of the plaintext. The development of this cipher involved several people, starting with Leon Battista Alberti in the mid 1400s and leading up to Blaise de Vigen\'{e}re's \textit{Traict\'{e} des Chiffres} in 1585 \cite{Vi1586}. We refer the reader to Chapter 4 of \cite{Ka1996} for a detailed description of the history of the Vigen\`{e}re cipher and Section 3.5 of \cite{Do2018} for a brief description. The Vigen\'{e}re cipher has also seen a wide range of applications in history and popular culture, including communications between Confederacy officers during the American Civil War \cite{Bo2006}, \cite{Bo2016} and the CIA's Kryptos sculpture \cite{BaLiMo2016}.

The first step in cryptanalyzing the Vigen\`{e}re cipher, or any polyalphabetic substitution cipher, is finding the key length. The two most well-studied tools for this process are the Babbage--Kasiski test \cite{Ka1863} from the 1860s and the index of coincidence (IC) \cite{Fr1920}, which was introduced by William Friedman in 1920. We refer the reader to \cite[Chapter 2]{Ba2021} for background on these topics. While both of these tools can be useful for finding the key length, in many cases, they run into accuracy and implementation issues. For instance, if the Babbage--Kasiski test predicts a key length of $m$, then all divisors of $m$ are also reasonable conjectures for the key length. This test also won't be helpful if very few (or no) repeated tri-grams are found in the ciphertext. At the same time, the IC decreases in accuracy as key length increases, which can  be seen experimentally in Figure 2 of \cite{Ma1988} and Figure 3 of \cite{PaKiChYu2020}. In addition, the key length estimate determined by the IC is dependent on your keyword containing $k$ distinct letters. If many letters are repeated in the keyword, then the IC will sometimes underestimate $k$. Through the 20th century, a few other methods have been proposed for finding the key length (see \cite{Ma1988} for one example), though all those methods seem to only be accurate for particular ranges of key length and message length. These challenges motivate the development of more robust key length attacks.

More recently, the twist algorithm \cite{BaSi2015} was introduced by Barr and Simoson in 2015 as a new key length attack, which finds the potential key length $m$ that maximizes a ``twist index.'' This index measures the difference between the number of frequent and infrequent letters that are conjectured to be encrypted with the same shift based on a predicted key length of $m$; see Section \ref{subsec:Twist} for precise details. An improved version of this algorithm called the twist$^{+}$ algorithm was then introduced by Park--Kim--Cho--Yum in 2020. An important observation made by Park--Kim--Cho--Yum which led to the development of the twist$^{+}$ algorithm is that if you conjecture a key length of $m$ for a fixed ciphertext, then the twist index is nondecreasing as a function of multiples of $m$. This fact is only verified experimentally in their paper, but plays a crucial role in highlighting a major weakness of the twist algorithm: it is not clear how to distinguish the  key length from its multiples with the twist algorithm. This fact is essential for motivating both the twist$^{+}$ algorithm and our own modification to the twist algorithm. We give a proof that this property holds under certain conditions in Lemma \ref{lem:lambda m bigger than m} to provide more conclusive evidence of the twist algorithm's major flaw. 

While Park--Kim--Cho--Yum provide experimental evidence (see Figure 3 of \cite{PaKiChYu2020}) to support the accuracy of the twist$^{+}$ algorithm, there are  still  conditions where this algorithm does not perform well. As noted in the conclusion of their paper, the twist$^{+}$ algorithm is less effective for short keys.  In addition, the domain of possible key lengths to check can play a significant role in the effectiveness of both the twist and twist$^{+}$ algorithm, with a larger domain of values resulting in a decrease in accuracy. We examine this subtlety in detail in this paper, providing experimental evidence to back up our claims. In both types of scenarios, the twist$^{+}$ algorithm usually conjectures a multiple of the key length. To alleviate this weakness, we introduce our own (simple) modification of the twist algorithm, which we call the twist$^{++}$ algorithm. We provide experimental evidence showing that the twist$^{++}$ algorithm outperforms the twist$^{+}$ algorithm for a wide range of text lengths and key lengths, while being far less susceptible to decreases in accuracy as the domain of potential key lengths increases. We also provide an analysis of how the twist$^{++}$ algorithm behaves when it conjectures an incorrect key length. While this scenario is rare, it can occur under  particular conditions: when we have a relatively short text, and a relatively small  composite key length $k$, where $k=2m$ for some $m \in \mathbb{N}$. Under such conditions, the twist$^{++}$ algorithm sometimes predicts a key length of $m$, and we carefully explain why this is the case.

%%%%%%%%%%%%%%%%%%%%%%%%%%%%%%%%%%%%%%%%%%%%%%%%%%%%%%%%%%%%%%%%%%%%%

\section{The Twist Algorithm and the Twist$^{+}$ Algorithm}
\label{subsec:Twist}

In this section, we  review the twist algorithm introduced by Barr--Simoson \cite{BaSi2015} and the twist$^{+}$ algorithm introduced by Park--Kim--Cho--Yum \cite{PaKiChYu2020}. We also provide an updated analysis of these algorithms. Suppose we are given a text $\mathcal{M}$ of length $N$, with  text characters labeled in order as $x_1, \ldots, x_N$. In all relevant cases, $\mathcal{M}$ will be a ciphertext that was encrypted using the Vigen\`{e}re cipher. Under these conditions, let $m$ be a conjectured key length. We can partition our ciphertext into $m$ cosets, $\{\mathcal{M}_j\}_{j=1}^{m}$, where $\mathcal{M}_j = \{ x_i \hspace{0.02in} :  \hspace{0.02in} i \equiv j  \mod m\}.$ For each coset, we calculate the relative frequencies of each letter and then order these frequencies from smallest to largest. We let $C_j = <c_{1,j}, \hspace{0.02in} c_{2,j}, \hspace{0.02in} \ldots, \hspace{0.02in}  c_{26,j} >$ represent the vector of ordered relative frequencies for coset $M_j$, called the \textbf{sample signature} for $M_j$. 

If the conjectured key length $m$ is in fact the actual key length, $k$, then we should expect each  $C_j$ to approximate the behavior of ordered relative frequencies in the underlying language (English for all of our cases). If $m \neq \lambda k$ for $\lambda \in \mathbb{N}^{+}$, then each $C_j$ should more closely approximate the frequencies of a random text, which are more uniform than in English. The twist algorithm attempts to capture this distinction by analyzing the difference between the relative frequencies of the 13 most frequent letters and the 13 least frequent letters in each coset for a conjectured key length $m$. This motivates the \textbf{twist} of a sample signature: 

$$\diamondsuit C_j = \sum_{i=14}^{26} c_{i,j} - \sum_{i=1}^{13} c_{i,j}.$$

If a coset behaves  like a random text, then $\diamondsuit C_j$ should be small, while if a coset only contains letters from the same alphabet, then we should expect each $\diamondsuit C_j$ to be relatively large. We  now can define the twist algorithm in terms of maximizing the sum of the twists of sample signatures among possible key lengths, as introduced by Barr--Simoson \cite{BaSi2015}.

\begin{defn} \label{def:twist}
Let $\mathcal{M}$ be a text of length $N$. The \textbf{twist algorithm} finds $m \in \mathbb{N}^{+}$ that maximizes the \textbf{twist index}
$$T(\mathcal{M}, m) = \Big(\frac{100}{m}\sum_{j=1}^{m} \diamondsuit C_j\Big).$$
\end{defn}

\begin{remark}\label{remark0} In \cite{BaSi2015}, Barr--Simoson defined the twist index by rounding $T(\mathcal{M},m)$  to the nearest integer. We dropped the round function from the definition  since the version of $T(\mathcal{M},m)$ in Definition \ref{def:twist}  gives a more precise answer for which $m \in \mathbb{N}^{+}$ maximizes the twist index and it simplifies notation. 
\end{remark}

\begin{remark}
\label{remark1}
In both \cite{BaSi2015} and \cite{PaKiChYu2020}, there are no upper bound restrictions formally stated for $m$ in the definition of the twist algorithm, though only $1 \leq m \leq Q$, with $Q \in \{ 9,10\}$, are shared in the examples and data in these papers. For our purposes, we will always assume $m \in \{ 1, \ldots, q\}$, where $N = 12q + r$ for quotient $q$ and remainder $r$. When examining $T(\mathcal{M}, m)$ for $m > q$, we always have $T(\mathcal{M}, m) = 100$ since under these conditions each coset $\mathcal{M}_j$ contains at most thirteen distinct letters, and so, $\diamondsuit C_j$ is equal to the sum of the relative frequencies of all the letters in $\mathcal{M}_j$. This also highlights that the twist algorithm is not useful when the ratio $\frac{k}{N}$ is sufficiently large.  For experimental purposes, we might restrict $m$ to  smaller values, though we will always formally state what $m$-values are in the domain of $T(\mathcal{M},m)$ for the given analysis. 
\end{remark}

Challenges with the twist algorithm occur when $m = \lambda k$, for $\lambda \in \mathbb{N}^{+}$ and $\lambda \geq 2$. First, experimental evidence from \cite{PaKiChYu2020} shows that $T(\mathcal{M}, k) \leq T(\mathcal{M}, \lambda k)$ for any $k \in \mathbb{N}^{+}$, and we formally verify this result  under certain conditions in Lemma \ref{lem:lambda m bigger than m}. This is not surprising since each coset $\mathcal{M}_{j}$ relative to key length conjecture $m=k$ gets partitioned into $\lambda$ subsets, $\{\widetilde{\mathcal{M}}_{lm+j}\}_{l=0}^{\lambda -1}$ to form the cosets relative to key length conjecture $m = \lambda k$. Thus, if $k$ is the actual key length, then the letters in each coset $\widetilde{\mathcal{M}}_{lm+j}$ should still mirror the relative frequencies of the underlying language, and so, $T(\mathcal{M}, \lambda k)$ should still be a large value. As a result of Lemma \ref{lem:lambda m bigger than m} (and the experimentally verified behavior for the general case), we expect the twist algorithm to predict the largest possible multiple of the key length, assuming nontrivial multiples of the key length are part of the domain of $T(\mathcal{M}, m)$. While this is still useful information, Park--Kim--Cho--Yum noticed a modification could be made to increase accuracy for predicting the actual key length rather than a multiple. As noted in their paper, ``At $m= \lambda k$ for $\lambda \in \mathbb{N}^{+}$, the twist index $T(\mathcal{M}, m)$ usually has its local maximum. In addition, $T(\mathcal{M}, m)$ increases rapidly at these points. Therefore, we propose to find the key length by searching for the first point where the twist index increases rapidly.'' To formalize this idea, Park--Kim--Cho--Yum proposed that one should subtract the average of the previous $m-1$ twist indices from $T(\mathcal{M}, m)$ in order to more likely distinguish a global maxima from a local maxima. More precisely, we have the following definition.

\begin{defn}\label{defnTwistPlus}
Let $\mathcal{M}$ be a text of length $N$. The \textbf{twist$^{+}$ algorithm} finds $m \in \mathbb{N}^{+}$ that maximizes the \textbf{twist$^{+}$ index}
$$T^{+}(\mathcal{M}, m) = T(\mathcal{M},m) - \frac{1}{m-1} \sum_{\mu=1}^{m-1}T(\mathcal{M}, \mu),$$
where $m \in S \subseteq \{1, \ldots, q\}$ and $N = 12q + r$ for quotient $q$ and remainder $r$.
\end{defn}

Park--Kim--Cho-Yum verified the effectiveness of the twist$^{+}$ algorithm by analyzing plaintexts of length $N \in \{ 200, 300, 400, 500\}$ which were encrypted with keys of length $k \in \{2, \ldots, 10\}$, and comparing accuracy between the index of coincidence, the twist algorithm, and the twist$^{+}$ algorithm; see Figure 3 from \cite{PaKiChYu2020}. This data shows that the twist$^{+}$ algorithm is a significant improvement over the index of coincidence and the twist algorithm for the given parameters. However,  it is unclear what the exact domain of $m$-values were considered for this analysis, though it must have included $m$-values at least as large as $10$ since actual keys of length up to $10$ were used for encryption. If the only $m$-values considered were $m \in \{1,\ldots, 10\}$, then this could lead to a different conclusions about the accuracy of the twist$^{+}$ algorithm than if a larger set of $m$-values was considered, especially for the accuracy of larger key lengths. See Figure \ref{fig:twistdoublecomparison} and Figure \ref{fig:twistdoublecomparison_m20} for experimental evidence that support this claim. In addition, see Example \ref{TwistEx1} for a specific scenario where the twist$^{+}$ algorithm maximizes at index $m = 16$ (or possibly higher), but the actual key length is $k=4$. 

As a function of $m$, it is somewhat unclear how to determine an appropriate domain for $T^{+}(\mathcal{M},m)$. The largest domain one should consider is $m \in \{1,\ldots, q\}$, though a specific subset of these values could be used, as designated in Definition \ref{defnTwistPlus}. Smaller upper bounds might be more reasonable for experimental purposes, though this probably depends on what one considers ``fair'' test parameters. If we want to build a test where it is assumed that a key length $k$ can only be $k \in \{1,\ldots, Q\}$, for some fixed $Q$ known ahead of time, then there is no need to consider index values larger then $Q$. However, if it is unclear what the largest key length value might be, then it could make sense to include $m$-values as large as experimentally feasible, up to $q$.

To end this section, we provide a proof that $T(\mathcal{M}, m) \leq T(\mathcal{M}, \lambda m)$ under specific conditions, along with some remarks on the general case. This behavior of $T(\mathcal{M},m)$ was a crucial insight that motivated modifying the twist algorithm and was only verified experimental in \cite{PaKiChYu2020}.

\begin{lemma} \label{lem:lambda m bigger than m}
    Let $\mathcal{M}$ be a  text, and $m, \lambda \in \mathbb{N}^+$. If $\lambda$ divides  the size of each $m$-coset, then $T(\mathcal{M}, m) \le T(\mathcal{M}, \lambda m)$.
\end{lemma}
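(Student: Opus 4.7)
The plan is to reduce the global inequality over all cosets to a pointwise inequality on individual $m$-cosets, and then to recognize that pointwise inequality as a standard fact about sums of top-$k$ entries under averaging. First, I would reformulate $\diamondsuit C_j$ in a more symmetric form: since the sorted relative frequencies $c_{1,j}, \ldots, c_{26,j}$ account for all $26$ letters and sum to $1$, one has $\diamondsuit C_j = 2\sigma_j - 1$, where $\sigma_j := \sum_{i=14}^{26} c_{i,j}$ denotes the sum of the $13$ largest relative frequencies in $\mathcal{M}_j$. Consequently $T(\mathcal{M}, m) = \frac{200}{m}\sum_{j=1}^{m}\sigma_j - 100$, and the desired inequality $T(\mathcal{M}, m) \le T(\mathcal{M}, \lambda m)$ is equivalent to
$$ \frac{1}{m}\sum_{j=1}^{m} \sigma_j \;\le\; \frac{1}{\lambda m} \sum_{j=1}^{m}\sum_{l=0}^{\lambda-1} \widetilde{\sigma}_{j+lm}, $$
where $\widetilde{\sigma}_{j+lm}$ is the analogous quantity for the sub-coset $\widetilde{\mathcal{M}}_{j+lm}$ arising under the conjectured key length $\lambda m$. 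So it is enough to prove the pointwise bound $\sigma_j \le \frac{1}{\lambda}\sum_{l=0}^{\lambda-1}\widetilde{\sigma}_{j+lm}$ for each $j \in \{1,\ldots,m\}$.

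Next I would invoke the hypothesis. Writing $L_j := |\mathcal{M}_j|$, the assumption that $\lambda$ divides $L_j$ guarantees that the sub-cosets $\widetilde{\mathcal{M}}_{j+lm}$, $l = 0,\ldots,\lambda-1$, all have the \emph{same} size $L_j/\lambda$. This forces the relative frequency $p_\alpha$ of any letter $\alpha$ in $\mathcal{M}_j$ to be the simple arithmetic mean $p_\alpha = \frac{1}{\lambda}\sum_{l=0}^{\lambda-1} p_\alpha^{(l)}$ of its relative frequencies $p_\alpha^{(l)}$ across the sub-cosets. Letting $I$ be the set of $13$ letters whose frequencies in $\mathcal{M}_j$ realize $\sigma_j$, I would chain
$$ \sigma_j \;=\; \sum_{\alpha \in I} p_\alpha \;=\; \frac{1}{\lambda}\sum_{l=0}^{\lambda-1}\sum_{\alpha \in I} p_\alpha^{(l)} \;\le\; \frac{1}{\lambda}\sum_{l=0}^{\lambda-1}\widetilde{\sigma}_{j+lm}, $$
since for each $l$, $\sum_{\alpha \in I} p_\alpha^{(l)}$ is the sum of \emph{some} $13$ relative frequencies in $\widetilde{\mathcal{M}}_{j+lm}$, which is bounded above by the sum of the \emph{top} $13$. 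Summing this pointwise inequality over $j = 1, \ldots, m$ and dividing by $m$ recovers the reduced inequality above, completing the argument.

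I do not expect a major obstacle once this reformulation is in place; the heart of the argument is the elementary observation that the top-$13$ sum is a \emph{convex} functional on probability vectors, or equivalently that fixing the same $13$ coordinates across all summands can never beat choosing the optimal $13$ in each summand individually. The one place where the hypothesis is genuinely used is the uniform averaging identity for $p_\alpha$: if $\lambda \nmid L_j$, the sub-coset sizes differ by $1$ and $p_\alpha$ becomes a weighted (rather than uniform) average of the $p_\alpha^{(l)}$, turning the clean chain above into a weighted inequality with an error term that would need separate control. This is presumably why the lemma is stated under the divisibility assumption, with the general case left to the experimental evidence cited from \cite{PaKiChYu2020}.
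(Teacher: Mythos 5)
Your proposal is correct and follows essentially the same route as the paper: decompose each $m$-coset into its $\lambda$ sub-cosets, reduce to the pointwise inequality $\diamondsuit C_j \le \frac{1}{\lambda}\sum_{l}\diamondsuit\widetilde{C}_{lm+j}$, and conclude via the observation that summing a fixed set of $13$ letter frequencies can never exceed summing the top $13$ in each sub-coset. Your normalization $\diamondsuit C_j = 2\sigma_j - 1$ is a tidy streamlining that lets you handle only the top half rather than arguing the top and bottom halves separately as the paper does, but the decomposition, the use of the divisibility hypothesis (equal sub-coset sizes giving a uniform average of frequencies), and the key rearrangement step are the same.
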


\begin{proof} 
    Fix $m, \lambda \in \mathbb{N}^+$. Then $m$ gives a partition of $\mathcal{M}$ into the set of $m$-cosets $\{\mathcal{M}_j\}_{j=1}^{m}$ with $C_j$ denoting the corresponding sample signature for $\mathcal{M}_j$. Likewise, $\lambda m$ gives a partition of $\mathcal{M}$ into the set of $\lambda m$-cosets $\{\mathcal{\widetilde{M}}_j\}_{j=1}^{\lambda m}$ with  $\widetilde{C}_j$ denoting the corresponding sample signature of $\mathcal{\widetilde{M}}_{j}$. Note that each $m$-coset $\mathcal{M}_j$ is a disjoint union of $\lambda m$-cosets. In particular, for an $m$-coset $\mathcal{M}_j$ we can write:
    \begin{equation} \label{cj_subcosets}
        \mathcal{M}_j = \mathcal{\widetilde{M}}_j \sqcup \mathcal{\widetilde{M}}_{m + j} \sqcup \ldots \sqcup \mathcal{\widetilde{M}}_{(\lambda-1) m + j}
    \end{equation}
    Thus, to prove  $T(\mathcal{M}, m) \le T(\mathcal{M}, \lambda m)$, it suffices to show that for all $j=1,\ldots, m$, we have 
    
    \begin{equation} \label{lem:what_we_need_to_show}
        \diamondsuit C_j \le \frac{1}{\lambda} \big( \sum_{l=0}^{\lambda -1} \diamondsuit \widetilde{C}_{lm + j} \big)
    \end{equation}

    Before proceeding, we set up some notation and rephrase Inequality \ref{lem:what_we_need_to_show}. Let $C:= C_j$ for some fixed $j$ and let $\widetilde{C}_{l}$ for $0 \le l \le \lambda - 1$ be the sample signatures corresponding to the (sub)-coset $\mathcal{\widetilde{M}}_{lm +j}$ from the right-hand side of Equation \ref{cj_subcosets}. Let $|\mathcal{M}_j| = N$ and  $C = < c_1, \ldots, c_{26}>$ be the vector of ordered relative frequencies to coset $\mathcal{M}_j$. \par
    Define 
    $v := <Nc_1, \ldots, Nc_{26}> = <f_{1}, \ldots, f_{26}>$ i.e, $v$ denotes the ordered vector of raw frequencies corresponding to coset $C$. Let $v_L  = <f_{1}, \ldots, f_{13}>$ and $v_R = < f_{14}, \ldots, f_{26}>$. Similarly, let $\tilde{v}^{l} = < \tilde{f}_{1}^{l}, \ldots, \tilde{f}_{26}^{l}>$ denote the ordered vector of raw frequencies corresponding to (sub)-coset $\mathcal{\widetilde{M}}_{lm +j}$, and we use $\tilde{v}_{L}^{l}$ and $\tilde{v}_{R}^{l}$ to denote the 13-dimensional vectors corresponding to the first 13 entries of $\tilde{v}^{l}$ and the second 13 entries of $\tilde{v}^{l}$, respectively. Note that,  $\diamondsuit C = \frac{1}{N}\bigg(\displaystyle\sum_{i=14}^{26} f_i - \displaystyle\sum_{i=1}^{13} f_i\bigg)$.
    
Since we are assuming that $\lambda$ divides the size of each $m$-coset, we have that $\lambda$ divides $N$.  Then since each $|\widetilde{C}_l| = \frac{N}{\lambda}$, we have  $$\bigg( \displaystyle\sum_{k=0}^{\lambda -1} \diamondsuit \widetilde{C}_{km + j}\bigg) = \frac{\lambda}{N} \bigg[\sum_{k=0}^{\lambda-1} \bigg(\sum_{i=14}^{26} \widetilde{f}_{i}^{l} - \sum_{i=1}^{13}\widetilde{f}_{i}^{l}\bigg)\bigg].$$    
    Thus, we can can rephrase Inequality \ref{lem:what_we_need_to_show}  as

    \begin{equation} \label{inequalityrestated}
   \sum_{i=14}^{26} f_i - \sum_{i=1}^{13} f_i \leq \sum_{l=0}^{\lambda-1} \bigg(\sum_{i=14}^{26} \widetilde{f}_{i}^{l} - \sum_{i=1}^{13}\widetilde{f}_{i}^{l}  \bigg).
  \end{equation}

    In order to prove Inequality \ref{inequalityrestated} we need to make a connection between the raw frequencies $f_i$ for coset $\mathcal{M}_j$ and the raw frequencies $\widetilde{f}_{i}^{l}$ of each (sub)-coset $\widetilde{\mathcal{M}}_{lm+j}$, for $l = 0, \ldots, \lambda -1$. A key observation is that the ordered raw frequency vectors $\tilde{v}^{l}$ can be constructed as follows: given $v =  <f_{1}, \ldots, f_{26}>$, let $f_{i}^{alph}$ designate  the  letter that is counted by $f_{i}$. Then define $f_{i}^{l}$ as the number of occurrences of letter $f_{i}^{alph}$ in $\widetilde{\mathcal{M}}_{lm+j}$.
    %Partition the coset $C$ as required for $\widetilde{C}_l$ and record the number of times the $i^{th}$ ordered letter corresponding to the $i^{th}$ frequency in $v$ occurs in the sub-coset $\widetilde{C}_l$ (denote this number by $f_{i}^{l}$) above $f_i$. 
   Then to construct $\tilde{v}^{l} = < \tilde{f}_{1}^{l}, \ldots, \tilde{f}_{26}^{l}>$, we sort the entries of vector $<f _{1}^{l}, \ldots, f_{26}^{l}>$
    from smallest to largest. Note that, by construction $f_i = \sum_{l = 0}^{\lambda -1} f_{i}^{l}$.
   % \par
    %There are now two cases to consider. Let $|C| = N$ and suppose $\lambda$ divides $N$, then $|\widetilde{C}_l| = N/\lambda =: k$.
    %\par
    %Recall by Definition \ref{def:twist} we have
     %\begin{align*}
      %  \diamondsuit C = \ & \frac{1}{N} \big( \sum_{i = 14}^{26} f_i - \sum_{i = 1}^{13} f_i \big) \\
       %                = \ & \frac{1}{\lambda k} \big( \sum_{l = 1}^{\lambda} \sum_{i = 14}^{26} f_{i}^{l} -  \sum_{l = 1}^{\lambda} \sum_{i = 1}^{13} f_{i}^{l} \big)
    %\end{align*}

We now claim that $\sum_{i = 14}^{26} f_i  \le    \sum_{l = 0}^{\lambda -1} \sum_{i = 14}^{26} \tilde{f}_{i}^{l}$
Since $\sum_{i = 14}^{26} f_i  =    \sum_{l = 0}^{\lambda -1} \sum_{i = 14}^{26} f_{i}^{l}$, we just need to show $\sum_{i=14}^{26} f_{i}^{l} \le \sum_{i=14}^{26} \tilde{f}_{i}^{l}$ for each $l= 0 , \ldots, \lambda -1$.   For this, we can compare the vectors $v_{R}^{l}$ and $\tilde{v}_{R}^{l}$. Note that, $\sum_{i=14}^{26} f_{i}^{l}$ is  the sum of the entries of $v_{R}^{l}$ and $\sum_{i=14}^{26} \tilde{f}_{i}^{l}$ is the sum of the entries of $\tilde{v}_{R}^{l}$. If $v_{R}^{l}$ is a permutation of  $\tilde{v}_{R}^{l}$, then $\sum_{i=14}^{26} f_{i}^{l} = \sum_{i=14}^{26} \tilde{f}_{i}^{l}$. If $v_{R}^{l}$ is not a permutation of  $\tilde{v}_{R}^{l}$, then some values in $v_{L}^{l}$ are larger than some values in $v_{R}^{l}$. Thus, when we sort $v^{l}$ to put its entries in increasing order to create $\tilde{v}^{l}$, we could only increase the sum of the entries in the right-half of these respective vectors, i.e., $\sum_{i=14}^{26} f_{i}^{l} < \sum_{i=14}^{26} \tilde{f}_{i}^{l}$. This proves the claim and a similar process shows that $\sum_{i = 1}^{13} f_i  \ge  \sum_{l = 0}^{\lambda -1} \sum_{i = 1}^{13} \tilde{f}_{i}^{l}$, which then implies Inequality \ref{inequalityrestated}, giving the desired result. 
    \end{proof}

    \begin{remark}
A general proof of this fact could possibly proceed in the same manner, until we rely on our assumption that $\lambda$ divides $N$. If $\lambda$ does not divide $N$, then suppose $N = \lambda \alpha +r$, where $\alpha = \lfloor N / \lambda \rfloor$ and $1 \le r < \lambda$. We would want to show Inequality \ref{lem:what_we_need_to_show} holds under these conditions. In this case, the $\lambda$-cosets can be partitioned into two sets: $\{\mathcal{\widetilde{M}}_j\}_{j=0}^{r-1}$, each with $\alpha+1$ elements and $\{\mathcal{\widetilde{M}}_j\}_{j=r}^{\lambda  -1}$, each with $\alpha$ elements. However, the fact that the number of elements in these cosets varies creates challenges in moving to some version of Inequality \ref{inequalityrestated}. That being said, we experimentally verified that $T(\mathcal{M}, m) \le T(\mathcal{M}, \lambda m)$ still holds under this scenario for approximately 100,000 cases.
    \end{remark}

   %   \underline{Case 2:} Now assume that $\lambda$ does not divide $N$, and suppose $N = \lambda k +r$, where $k = \lfloor N / \lambda \rfloor$ and $1 \le r < \lambda$. We want to show Inequality \ref{lem:what_we_need_to_show} holds. In this case, the $\lambda$-cosets can be partitioned into two sets: $\{\mathcal{\widetilde{M}}_j\}_{j=0}^{r-1}$, each with $\frac{N}{\lambda}+1$ elements and $\{\mathcal{\widetilde{M}}_j\}_{j=r}^{\lambda  -1}$, each with $\frac{N}{\lambda}$ elements. \christian{Idea: remove one element from each of the r cosets in the first type of sets (and remove the corresponding elements from $C$, so then every set now has $\frac{N}{\lambda}$ elements. Then this new version of $C$ (with $r$ less elements) , call it $\bar{C}$, and this new partition of $\bar{C}$ coming from our $\mathcal{\widetilde{M}}_j$ partition just with one element removed from each of the first $r$ sets satisfies the original conditions and the desired inequality holds. Now, examine what happens when we add back in the $r$ elements or maybe just one element at a time and work inductively. However, I'm having trouble getting this to work out.}

%%%%%%%%%%%%%%%%%%%%%%%%%%%%%%%%%%%%%%%%%%%%%%%%%%%%%%%%%%%%%%%%%%%%%%%%%%

\section{The Twist$^{++}$ Algorithm}
\label{sec:twistdoubleplus}

While the twist$^{+}$ algorithm is frequently an effective tool for determining the key length of a Vigen\`{e}re cipher, there are  conditions where this algorithm does not perform well. Accuracy issues for the twist$^{+}$ algorithm can occur when we have short key lengths. Experimentally, the top left graph in Figure 3 of \cite{PaKiChYu2020} shows lower success rates for the twist$^{+}$ algorithm when $k \in \{2,3,4,5\}$ and $N = 200$.  In addition, as mentioned in Section \ref{subsec:Twist}, increasing the size of your domain of $m$-values for $T^{+}(\mathcal{M},m)$ can decrease the accuracy of the twist$^{+}$ algorithm. Since in many scenarios it will be unclear when you can restrict to a small set of $m$-values, this is another concern. In both of these scenarios, the twist$^{+}$ algorithm frequently  runs into issues similar to the twist algorithm had with predicting a nontrivial multiple of the actual key length.

%Experimentally, we can see these issues exacerbated when we consider $k \in \{2,3,4,5\}$ but with $100 \leq N \leq 200$. See \christian{INSERT GRAPHS OF EXPERIMENTAL DATA IF WE WANT TO JUSTIFY THIS AND KEEP THIS COMMENT}.

This raises the question: Can we further modify the twist algorithm to alleviate this issue? Here, we propose one adjustment to the twist algorithm where we  search for the value where the twist index increases rapidly and then decreases rapidly relative to a local neighborhood rather than compare the current twist index to the average of all of the  previous twist indices. 

\begin{defn}
Let $\mathcal{M}$ be a text of length $N$. The \textbf{twist$^{++}$ algorithm} finds $m \in \mathbb{N}^{+}$ that maximizes the \textbf{twist$^{++}$ index}
$$T^{++}(\mathcal{M}, m) = T(\mathcal{M},m) - \frac{1}{2} \Big(T(\mathcal{M}, m-1) + T(\mathcal{M}, m+1)\Big),$$
where $m \in S \subseteq \{2, \ldots, q\}$ and $N = 12q + r$ for quotient $q$ and remainder $r$.
\end{defn}

Collectively, we refer to the twist algorithm, twist$^{+}$ algorithm, and twist$^{++}$ algorithm as the \textbf{twist-based algorithms.}

To directly compare the twist$^{++}$ algorithm to the twist$^+$ algorithm we replicated the parameters of the data set for which the twist$^+$ algorithm was analyzed against in \cite{PaKiChYu2020}. The data set included 100 texts of lengths $N \in \{ 200, 300, 400, 500 \}$ each encrypted with $40$ keys for the lengths $2 \le k \le 10$. Figure \ref{fig:twistdoublecomparison} demonstrates the performance of the twist$^+$ algorithm versus the twist$^{++}$ algorithm when we consider $2 \le m \le 15$. The twist$^{++}$ algorithm performs significantly better overall, especially when $N \in \{ 200, 300, 400 \}$ for $ 2\le k \le 7$. \par

Figure \ref{fig:twistdoublecomparison_m20} shows that the difference in accuracy increases between  the twist$^+$ algorithm and twist$^{++}$ algorithm when the values $2 \le m \le 20 $ are considered. As discussed in Section \ref{subsec:Twist}, the twist$^+$ algorithm is more likely to predict a multiple of the key length as the domain of $m$-values increases, which is verified experimentally here. However, The twist$^{++}$ algorithm still maintains a high level of accuracy  under these parameters.

For $2 \leq m \leq 20$, the \textit{only} parameters where the twist$^{+}$ algorithm outperforms the twist$^{++}$ algorithm is when $N=200$ and $k \in \{8,9,10\}$. Under these conditions, the twist$^{++}$ algorithm frequently predicts the largest nontrivial divisor of the key length. For instance, out of $4000$ texts of length $N=200$ encrypted with a key of length $k=8$, the twist$^{++}$ algorithm  incorrectly predicted the key length   $721$ times. Out of these $721$ texts, the twist$^{++}$ algorithm predicted a key length of $m=4$ (i.e., the largest nontrivial divisor of $k=8$), approximately $91.68\%$ of the time. See Table \ref{Table:Twistplusplusfails} for data on the cases where $k=8,9,10$. It is interesting to see that that when these twist-based algorithms fail, their predictions usually go in different directions: the twist$^{+}$ algorithm frequently predicts a nontrivial multiple of the actual key length, while the twist$^{++}$ algorithm frequently predicts the largest nontrivial divisor of the actual key length. 

For the twist$^{++}$ algorithm, let us discuss why we expect incorrect predictions to frequently be the largest nontrivial divisor of the actual key length $k$. First, let us  consider how $T^{++}(\mathcal{M},m)$ behaves for $m<k$. If $m$ is a divisor of $k$, then the $k$-cosets determined by the actual key can be partitioned into groups of $\frac{k}{m}$ sets, each making up one of the $m$-cosets determined by the predicted key length of $m$. When $m$ is large as possible, we are minimizing the number of (different) $k$-cosets we are combining into an $m$-coset. This should still result in a relatively large $T(\mathcal{M}, m)$-value since we have only combined a few alphabets together. At the same time, if $m$ is the largest divisor of $k$, then  $m+1$  will not be a divisor of $k$, and $m-1$ will sometimes not be a divisor of $k$; see Remark \ref{remark:divisors}. As a result, $T(\mathcal{M}, m+1)$  should usually be much smaller than $T(\mathcal{M}, m)$ since the $(m+1)$-cosets  will frequently behave closer to a random sample rather than combining samples from a few alphabets. The same behavior should be expected for $T(\mathcal{M}, m-1)$ when $m-1$ does not divide $k$. Thus, $T^{++}(\mathcal{M}, m)$ should be relatively large at such $m$-values. Now suppose $m$ does not divide $k$, then the $m$-cosets should behave more like random samples, and so, $T(\mathcal{M}, m)$ should be relatively small, and likewise $T^{++}(\mathcal{M},m)$ will too. Thus, for $m<k$, then we should expect $T^{++}(\mathcal{M},m)$ to likely have a global maxima at the largest divisor of $k$. See Example~\ref{ex:twist++explain} for a short example analyzing this coset behavior. 

\begin{ex}
\label{ex:twist++explain}
Consider the following part of a cipher text with $k = 6$.
\begin{center}
    DYECYLYLWCAQGGNGHBAEZUHQ \\
\end{center}
We color the letters with the same equivalence class mod 6, with the colors blue, black, cyan, red, green, and purple, respectively.
\begin{center}
    \textcolor{blue}{D}
    Y
    \textcolor{cyan}{E}
    \textcolor{red}{C}
    \textcolor{green}{Y}
    \textcolor{purple}{L}
    \textcolor{blue}{Y}
    L
    \textcolor{cyan}{W}
    \textcolor{red}{C}
    \textcolor{green}{A}
    \textcolor{purple}{Q}
    \textcolor{blue}{G}
    G
    \textcolor{cyan}{N}
    \textcolor{red}{G}
    \textcolor{green}{H}
    \textcolor{purple}{B}
    \textcolor{blue}{A}
    E
    \textcolor{cyan}{Z}
    \textcolor{red}{U}
    \textcolor{green}{H}
    \textcolor{purple}{Q}
\end{center}
When $m = 3$ we see that the first coset $C_{1}^{3}$ only contains letters from two alphabets (letters in blue and red), whereas when $m = 2$ or $m = 4$, the first cosets $C_{1}^{2}$ and $C_{1}^{4}$ both contain letters coming from three alphabets (letters in blue, green and cyan). Furthermore, if we consider $m \in \mathbb{N}$ such that $gcd(m,k)=1$, then any $m$-coset $C^{m}_{i}$ will contain letters from a wide range of alphabets, and given a long enough message, $C^{m}_{i}$ will contain letters from each of the $k$-alphabets. For instance, if we consider $C^{5}_{1}$ for this example, then this coset contains letters from all five alphabets (all colors represented).

\end{ex}

Now, we will  consider  how $T^{++}(\mathcal{M},m)$ behaves for $m>k$. First, as $m$ gets sufficiently large, we quickly see less variation in $T(\mathcal{M},m)$-values, and so, $T^{++}(\mathcal{M},m)$-values will be relatively small. Thus, we will rarely expect the twist$^{++}$ algorithm to predict an  $m$-value significantly larger than $k$, which will frequently make (nontrivial) multiples of $k$ unlikely candidates to maximize $T^{++}(\mathcal{M},m)$. At the same time, if $m$ is not a multiple of $k$, then the $m$-cosets behave closer to random samples and we are unlikely to have large values for both $T(\mathcal{M},m)$ and $T^{++}(\mathcal{M},m)$, as discussed in the previous paragraph. Thus, we should expect  $T^{++}(\mathcal{M},m)$-values to usually be relatively small, whenever $m>k$. 

Based on the previous  two paragraphs, we can see that if $m \neq k$, then $T^{++}(\mathcal{M},m)$ should frequently have a global maxima when $m$ is the largest nontrivial divisor of $k$. As a result, if the twist$^{++}$ algorithm makes an incorrect prediction, then it should frequently predict the largest nontrivial divisor of the actual key length. This also implies that we rarely expect the twist$^{++}$ algorithm to make incorrect predictions for prime key lengths. 

\begin{remark}\label{remark:divisors}
Suppose $m$ is the largest nontrivial divisor of $k \in \mathbb{N}$ with $k>1$ and $k$ composite. We claim that if $m-1$ is composite, then $m-1$ can not divide $k$. Since $m$ is the largest nontrivial divisor of $k$, we must have $k = m \cdot p$, for some prime $p$ (where $p$ is the smallest prime in the prime factorization of $k$). Suppose $m-1$ divides $k$. Since $gcd(m-1,m)=1$ and $m-1$ divides $m \cdot p$, we have that $m-1$ divides $p$. Since $p$ is prime, we can conclude that $m-1=p$, proving the claim. 
Thus, when we say ``$m-1$ will sometimes not be a divisor of $k$'' we know this will occur at least whenever $m-1$ is composite. 
\end{remark}

\begin{remark}\label{remark:sufficientlylarge}
In the discussion above, we are intentionally making ``as $m$ gets sufficiently large'' imprecise, though this condition is dependent on both $m$ and $N$ since we need $\frac{N}{m}$ to become sufficiently small for $T(\mathcal{M},m)$-values to stay within a sufficiently small neighborhood of $100$.  
\end{remark}

%\yeeka{I think it is a nice explanation. I've added a basic example below to illustrate the point about cosets containing more alphabets for $m$ not dividing $k$. Feel free to let me know what you think (although it does kind of screw up the formatting with the graphs all being on one page, which I think was nice. \\\\
%In fact, I think there is more we could say here to explain why the twist$^{++}$ if it doesn't guess the key length $k$, would likely guess the largest non-trivial divisor of $k$, at least in the case where $k$ is even and $m = k/2$. In this case, consider the cyclic group under addition $\mathbb{Z}_{k}$. We can label our partition of the cipher text into $k$-cosets by the elements $\{0, \ldots , k-1 \}$ of $\mathbb{Z}_{k}$. It is well known that the generators of $\mathbb{Z}_{k}$ are the elements in $\{0, \ldots , k-1 \}$ which are co-prime to $k$. Thus, the corresponding cosets of those elements in fact contain all $k$ alphabets (provided the text is long enough), whereas $m = k/2$ contain only two alphabets. Therefore, we should expect a substantial difference between $T(\mathcal{M}, m)$ and $T(\mathcal{M}, n)$ when $n$ is co-prime to $k$. For example if $k = 12$, then $5$ and $7$ are co-prime, or when $k = 24$, then $11$ and $13$ are co-prime. In Example~\ref{TwistEx2} we have $k = 8$ and $3$ and $5$ are co-prime. \\\\
%I think we can be even more precise here - but I guess we need to stop somewhere and just submit this!
%}

\begin{figure}[H]
    \centering
    \includegraphics[scale=0.33]{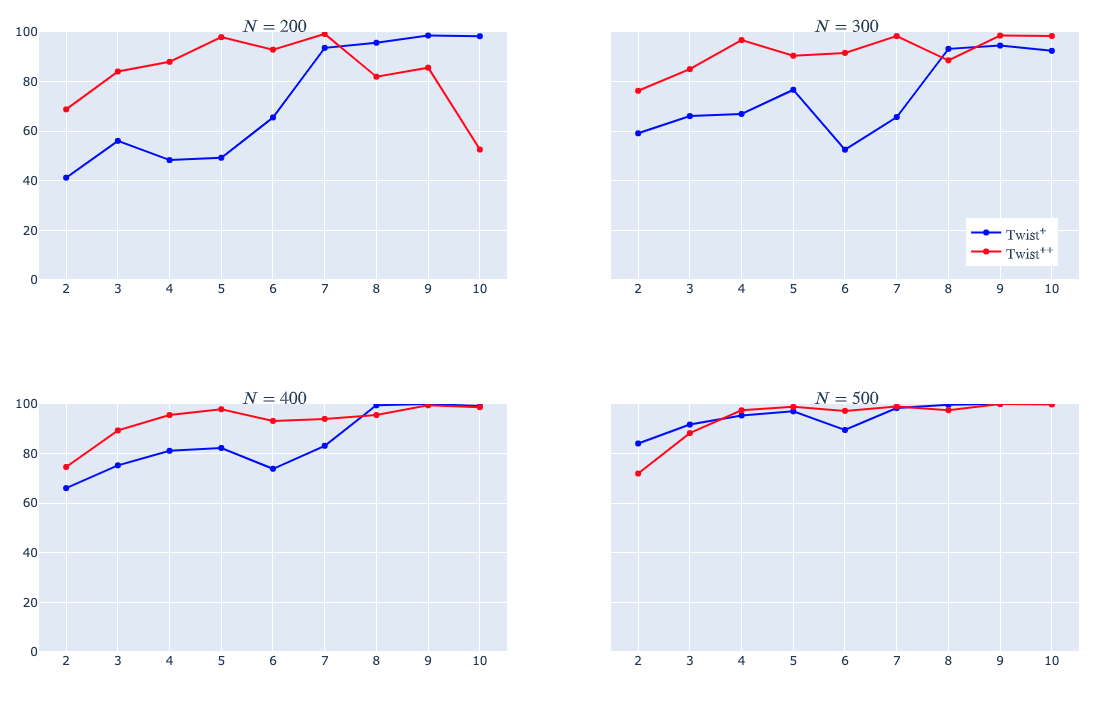}
    \caption{A comparison of the twist$^+$ algorithm and the  twist$^{++}$ algorithm with respect to $ 2 \le m \le 15$. The $x$-axis represents possible key lengths and the $y$-axis represents the success rate in percentage. Each graph is relative to a fixed message length $N$ designated at the top of the graph.}
    \label{fig:twistdoublecomparison}
\end{figure}

\begin{figure}[H]
    \centering
    \includegraphics[scale=0.33]{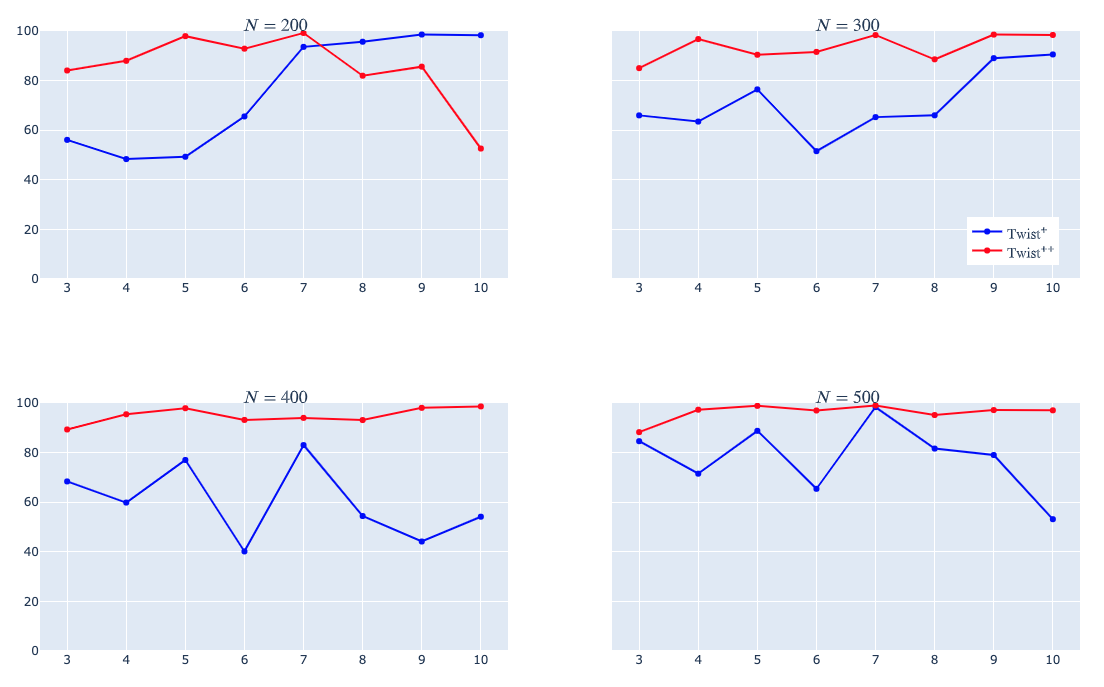}
    \caption{A comparison of the twist$^+$ algorithm and the  twist$^{++}$ algorithm with respect to $ 2 \le m \le 20$. The $x$-axis represents possible key lengths and the $y$-axis represents the success rate in percentage. Each graph is relative to a fixed message length $N$ designated at the top of the graph.}
    \label{fig:twistdoublecomparison_m20}
\end{figure}

\begin{table}[h!]
\begin{center}
\label{Table:Twistplusplusfails}
\begin{tabular}{ |c|c|c|c|c|c| } \hline
 Number of texts & $N$  & $k$  & Incorrect predictions & Predicted largest divisor & Percentage  \\ \hline
 4000 & 200 & 8 & 721 & 661 & $91.68\%$ \\ \hline
  4000 & 200 & 9 & 575 & 345 & $60\%$\\  \hline
4000 & 200 & 10 & 1896 & 1313 & $69.25\%$ \\ \hline
\end{tabular}
\caption{Under parameters where the twist$^{++}$ algorithm performs weakest, it usually predicts the largest (nontrivial) divisor of the actual key length. Here, $k$ stands for actual key length, $N$ stands for text length, and the percentage column represents the percentage of incorrect predictions that were largest (nontrivial) divisor of the actual key length.}
\end{center}
\end{table}

We now provide two examples comparing the twist-based algorithms to help give a deeper understanding of their differences. Our first example illustrates what can happen when the twist$^{+}$ algorithm fails and predicts a multiple of the key length, while the twist$^{++}$ algorithm succeeds in predicting the actual key length. We then consider an example   where the twist$^{+}$ algorithm succeeds, but the twist$^{++}$ algorithm fails by predicting the largest nontrivial divisor of the key length.

\begin{ex}
\label{TwistEx1}
For our first example, we will be using part of the prologue of \textit{Romeo and Juliet} as our plaintext:

\begin{flushleft}
``Two households, both alike in dignity, \\
(In fair Verona, where we lay our scene), \\
From ancient grudge break to new mutiny, \\
Where civil blood makes civil hands unclean. \\
From forth the fatal loins of these two foes \\
A pair of star-cross'd lovers take their life; \\
Whose misadventur'd piteous overthrows \\
Doth with their death bury their parents’ strife. \\
The fearful passage of their death-mark'd love \\
And the continuance of their parents’ rage''
\end{flushleft}

Once punctuation and spacing are removed, the resulting text has $N=350$ letters. We then encrypted this message with the keyword ‘will’, i.e., key length $k = 4$. This gave us the following ciphertext $\mathcal{M}$:

\begin{flushleft}
PEZSK CDPDW WOOJZ EDIWT GMTYZ QRYEB JTJNL TNDPC KVLHD MCPSM WLUWF COKPY ANCZI IYNEM YECZF OCMMC AIVEK VPHIC ETJGH SAZPN EDTWX TZZZU LVAAN TRQWS WVODQ VNWAI YQNWX QKZES PPPQW BLWHW TYOWQ EDMDP PEZQK MDLLI TCKND EWZNC KADOH WGPNA ELGME SAQCW ENPHD WDPIQ DLZDP YPCCO LQEPK CDZRM CEDZZ HOLZE DETED BSPEZ OPWBS MQZJE DMTCL ICPJB DDPZT QABSP BMLCB CWAWA DLCMZ QPPPT NLPLP PXLNS OWKDPL JLESA KZYPQ YFWVN PKNES AQCAW ZPYPA CLCM
\end{flushleft}

We then analyzed how well different key length attacks performed for this ciphertext.  First, $IC(\mathcal{M}) \approx 0.0503$, which results in the IC predicting a key length of $m \approx 2.265$. The Babbage--Kasiski test found  distances of 16, 32, 56, 87, 104, 124, 128, 140, 156, 160, 220, 224, and 247 between repeated trigrams. The greatest common divisor of a majority of these distances is 16. Thus, this test predicts a key length that is a factor of 16. For the twist-based algorithms, we considered $m$-values  of $1 \leq m \leq 25$  to make sure we included indices (potential key lengths to check) that are the first couple of multiples of the actual key length $k=4$. For this example, the twist$^{+}$ algorithm predicts a key length of 16. Thus,  all of these tests predict a divisor of $k$, a multiple of $k$, or a set of possible key length values that includes $k$ along with factors and divisors of $k$. However, the twist$^{++}$ algorithm predicts a key length  of $4$ for this example, making this test the only one that correctly predicts the key length as the only possible value. 

We provide Table \ref{Table3.2} and Figure \ref{fig:TwistEx1} to help understand how these twist-based algorithms behave for this example. For these visuals, we only consider indices $1 \leq m \leq 17$ for formatting purposes. For $m \geq 18$, $T(\mathcal{M},m)$ quickly converges to $100$. This is expected since once $m$ is sufficiently large, each coset contains very few letters; see Remark \ref{remark1}. In fact, for any $m \geq 23$, $T(\mathcal{M},m) =100$ for this text. As a result, we  expect twist$^{++}$ indices to become sufficiently small for large $m$-values, making it extremely unlikely that the twist$^{++}$ algorithm will accidentally predict a large multiple of the actual key length, which was a significant issue for the twist algorithm by Lemma  \ref{lem:lambda m bigger than m}.  Here, the local maxima that occurs at $m= k (=4)$ for $T(\mathcal{M}, m)$ is also the point where $T(\mathcal{M}, m)$ has the largest left-hand and right-hand derivatives, i.e., when $T^{++}(\mathcal{M},m)$ index is maximized, which one can visually verify in Figure \ref{fig:TwistEx1}. Based on experimental evidence, this phenomenon frequently occurs for short key lengths, which  motivated the definition of the twist$^{++}$ index. While it is more difficult to visualize the behavior of the twist$^{+}$ algorithm based on the graph of the twist indices, we do see that the graph of $T^{+}(\mathcal{M},m)$ has local maxima at $m = \lambda k$ with $\lambda \in \{1,2,3,4\}$  and a global maximum at $m = 16 ( =4k)$, for the given domain. Notice that for $T^{+}(\mathcal{M},m)$, the third highest local minima occurs at $m=4$ (the actual key length value) and the second highest local minima occurs at $m=12$. Thus, as we vary our domain of $m$-values, the twist$^{+}$ algorithm could make a variety of predictions about key lengths and would only predict correctly if $m \leq 11$ are considered. This emphasizes the importance of clearly stating your domain of $m$-values for twist-based algorithms.

\begin{table}%[h!]
\begin{center}
\label{Table3.2}
\begin{tabular}{ |c|c|c|c|c|c|c|c|c|c|c|c|c|c|c|c|c|c| } 
 \hline
 $m$ & 1 & 2 & 3 & 4 & 5 & 6 & 7 & 8 & 9 & 10 & 11 & 12 & 13 & 14 &15 & 16 & 17 \\ \hline
 $T(\mathcal{M},m)$ & 50 & 59 & 58  & 77 & 63 & 67 & 64 & 83 & 73 & 81 & 78 & 91 & 82 & 87 & 89 & 99 & 90 \\ 
 \hline
 $T^{+}(\mathcal{M},m)$ &  & 9 & 3 & 21 & 2 & 5 & 2 & 20 & 7 & 15 & 10 & 22 & 11 & 16 & 17 & 26 & 15 \\ \hline
 $T^{++}(\mathcal{M},m)$ &  & 5 & -11 & 17 & -9 & 3 & -11 & 15 & -9 & 6 & -8 & 11 & -7 & 1 & -4 & 10 & -8 \\ \hline
\end{tabular}
\caption{Twist indices for Example \ref{TwistEx1}. Values are rounded to the nearest integer.}
\end{center}
\end{table}

%The most significant part about this particular example is the Twist+ failing to guess the keyword length correctly. Typically for texts greater than 200 characters, we see that the Twist+ is accurate a majority of the time; however, because we have a shorter keylength, the accuracy of the Twist+ decreases and will typically guess a multiple of the keylength rather than the actual keylength. 

\begin{figure}%[H]
	\centering
	\begin{overpic}[scale = 0.90]{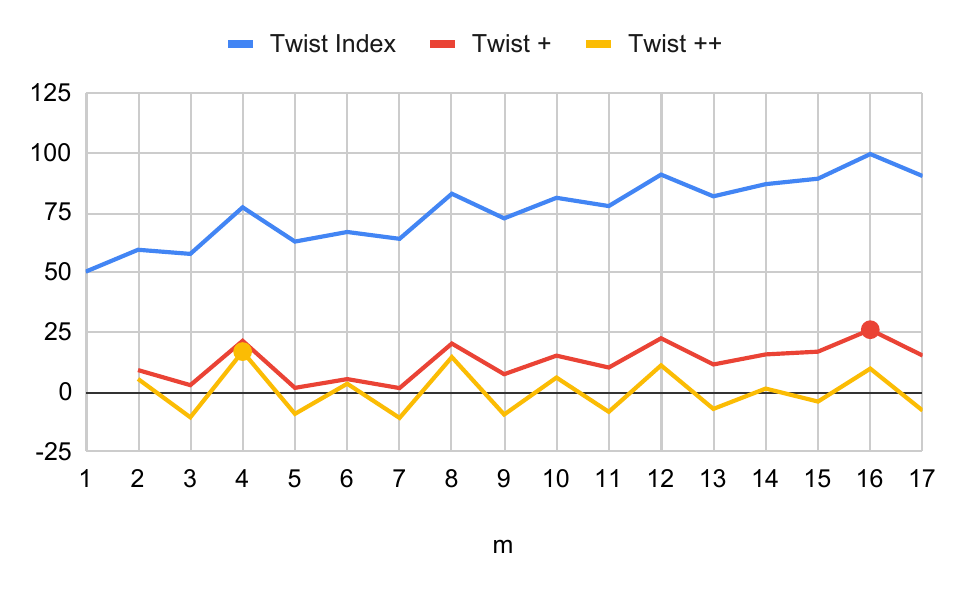}
	\end{overpic}
 \caption{A graphical depiction of the results for the twist-based algorithms for Example \ref{TwistEx1}. For $T^{+}(\mathcal{M},m)$ and $T^{++}(\mathcal{M},m)$, we have marked the absolute maximums for these graphs.}
	\label{fig:TwistEx1}
\end{figure}

\end{ex}

\begin{ex}
\label{TwistEx2}
For our second example we will be using part of the poem Overlooked by Emily Pauline Johnson as our plaintext: \begin{flushleft}
``Has passed me by.\\
I called, "O stay thy flight," but all unheard \\
                    My lonely cry: \\
    O! Love, my tired heart had need of thee! \\
    Is thy sweet kiss withheld alone from me? \\

    Sleep, sister-twin of Peace, my waking eyes \\
                    So weary grow! \\
    O! Love, thou wanderer from Paradise, \\
                    Dost t'' 
\end{flushleft}
Once punctuation and spacing are removed, the resulting text has $N=200$ letters. We encrypted this message with the keyword ``firebird'', i.e., key length $k=8$. This gave us the following ciphertext $\mathcal{M}$: 

\begin{flushleft}
MIJTB AJHIU VFZQT DQTVH PAKDD BYCGT ZJMBS YUICO ZVYIB ZUPDT FRFTP FWGFP PDVPD BZVFL YHFZK LBLEH JLFJU PVHNA KLZAN HJBBM TANLY PYIML ROTVV JSWDP JACIF XJLXB VVUEZ QTNGI BKVPD EROJV XHDMJ WPEVD WGXVP EFOTD VXIWL ZFVUI SMIIW WDTBZ RGNAV HPAKW
\end{flushleft}

For this text, $IC(\mathcal{M}) \approx 0.0420$ which results in the IC predicting a key length of $m \approx 6.7997$. The Babbage--Kasiski test found distances of $80$, $122$, and $176$. The $gcd(80, 122, 176) =2$, while the $gcd(80, 176) = 16$.  So the Babbage--Kasiski test directs one towards a divisor of $16$, though with minimal evidence. Finally, the twist$^{+}$ algorithm predicts a key length of $8$, while twist$^{++}$ algorithm predicts a key length of $4$ (the largest nontrivial divisor of $k=8$). As discussed earlier in this section, this behavior is expected for the twist$^{++}$ algorithm when it makes an incorrect prediction for such $N$ and $k$. 

We provide Table \ref{table:3} and Figure \ref{fig:TwistEx2} to assist with this example. Here, $m$-values larger than the ones shared only result in small variations of $T(\mathcal{M},m)$-values and do not maximize $T^{+}(\mathcal{M},m)$ nor $T^{++}(\mathcal{M},m)$. As discussed earlier, we can see that the twist$^{++}$ algorithm is very unlikely to predict a multiple of $k$ since there is so little variation in $T(\mathcal{M}, m)$ when $m>k$ under such parameters. We can also see that the second largest $T^{++}(\mathcal{M},m)$-value occurs at $m=8$, the actual key length. 

\begin{table}%[h!]
\begin{center}

\begin{tabular}{ |c|c|c|c|c|c|c|c|c|} 
 \hline
 $m$ & 1 & 2 & 3 & 4 & 5 & 6 & 7 & 8  \\ \hline
 $T(\mathcal{M},m)$ & 40	& 57 &	52.0277 &	78 &	58 &	71.925 &	65.975 &	90 	  \\ 
 \hline
 $T^{+}(\mathcal{M},m)$ &  & 17	& 3.528 &	28.324	& 1.243 &	14.920	& 6.483 &	29.582 	
 \\ \hline
 $T^{++}(\mathcal{M},m)$ &  & 10.986	& -15.472 &	22.986 &	-16.963 &	9.938 &	-14.988 &	15.006 
 \\ \hline
 $m$ & 9 & 10 & 11 & 12 & 13 & 14 & 15 & 16  \\ \hline 
$T(\mathcal{M},m)$ & 84.014 &	88	& 93.993 &	100 &	99.0385  &	100 &	99.048 &	100	 \\ \hline
$T^{+}(\mathcal{M},m)$ & 19.898 &	21.673 &	25.498 &	29.188 &	25.794 &	24.771 &	22.050  &	21.532	 \\ \hline
$T^{++}(\mathcal{M},m)$ &	-4.986 &	-1.003 &	-0.007 &	3.484 &	-0.962 &	0.957 &	-0.952 &	0.476	\\ \hline
\end{tabular}
\caption{Twist indices for Example \ref{TwistEx2}. Values are rounded to three decimal places.}
\label{table:3}
\end{center}

\end{table}

\begin{figure}%[H]
	\centering
	\begin{overpic}[scale = 0.90]{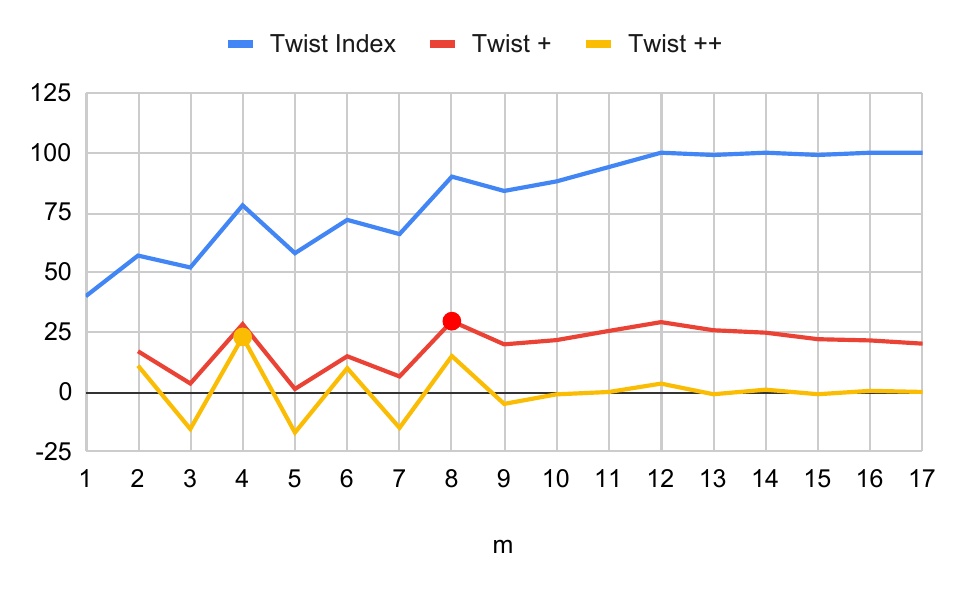}
	\end{overpic}
 \caption{A graphical depiction of the results for the Twist-based algorithms for Example \ref{TwistEx2}. For $T^{+}(\mathcal{M},m)$ and $T^{++}(\mathcal{M},m)$, we have marked the absolute maximums for these graphs.}
	\label{fig:TwistEx2}
\end{figure}
\end{ex}

%%%%%%%%%%%%%%%%%%%%%%%%%%%%%%%%%%%%%%%%%%%%%%%%%%%%%%%%%%%%%%%%%%%%%%%%%%%%%

\section{Funding Details}
This work was financially supported by the Furman University Department of Mathematics  via the Summer Mathematics Undergraduate Research Fellowships.

\section{Biographical Note}

\noindent \textbf{Christian Millichap} is an Associate Professor of Mathematics at Furman University in Greenville, SC. His research interests are in geometric topology and knot theory. He has also enjoyed teaching a variety of classes in cryptology for high school students and undergraduates. 
\\\\
\textbf{Yeeka Yau} is an Assistant Professor of Mathematics at the University of North Carolina Asheville. His research interests are in Coxeter groups, combinatorial and geometric group theory, cryptology and machine learning.
\\\\
\textbf{Alyssa Pate}  is a student at Furman University. She is studying Applied Mathematics and Data Analytics with hopes of going into the Data Analytics field. With friend and fellow collaborator, Morgan Carns, she participated in the 2023 Kryptos Competition. She enjoys learning more about Cryptology and hopes to continue to discover more in the future.
\\\\
\textbf{Morgan Carns}  is a student at Furman University getting her Bachelor of Science in Applied Mathematics with a Data Analytics Minor.  Her cryptology experience started with the Kryptos competition held by Central Washington University where she placed as an amateur codebreaker for solving one of three challenges.  She then continued her exploration in cryptology by completing summer research through the Furman Mathematics Department in the summer of 2023.  She hopes to continue her cryptology experience throughout the rest of her life.

\section{Data Availability Statement}

The data that support the findings of this study are openly available in Zenodo at \\ \href{https://doi.org/10.5281/zenodo.8373071}{https://doi.org/10.5281/zenodo.8373071}.

\bibliographystyle{hamsplain}
\bibliography{biblio}

\end{document}